\newtheorem{thm}{Theorem}[section]
\newtheorem{prop}[thm]{Proposition}
\newtheorem{defn}[thm]{Definition}
\newenvironment{example}[1][Example]{\begin{trivlist}
\item[\hskip \labelsep {\bfseries #1}]}{\end{trivlist}}
\newcommand{\Pin}{\mathop{\mathrm{Pin}}}
\newcommand{\Spin}{\mathop{\mathrm{Spin}}}
\newcommand{\Cl}{\mathop{\mathrm{Cl}}}
\numberwithin{equation}{section}
\begin{document}

% Use the \preprint command to place your local institutional report
% number in the upper righthand corner of the title page in preprint mode.
% Multiple \preprint commands are allowed.
% Use the 'preprintnumbers' class option to override journal defaults
% to display numbers if necessary
%\preprint{}

%Title of paper
\title{The Birth of $E_8$ out of the Spinors of the Icosahedron}

% repeat the \author .. \affiliation  etc. as needed
% \email, \thanks, \homepage, \altaffiliation all apply to the current
% author. Explanatory text should go in the []'s, actual e-mail
% address or url should go in the {}'s for \email and \homepage.
% Please use the appropriate macro foreach each type of information

% \affiliation command applies to all authors since the last
% \affiliation command. The \affiliation command should follow the
% other information
% \affiliation can be followed by \email, \homepage, \thanks as well.
\author{Pierre-Philippe Dechant}
\email[]{pierre-philippe.dechant@york.ac.uk}
%\homepage[]{Your web page}
%\thanks{}
%\altaffiliation{}
\address{Departments of Mathematics and Biology, York Centre for Complex Systems Analysis, University of York, Heslington, York YO10 5GG, United Kingdom}

%Collaboration name if desired (requires use of superscriptaddress
%option in \documentclass). \noaffiliation is required (may also be
%used with the \author command).
%\collaboration can be followed by \email, \homepage, \thanks as well.
%\collaboration{}
%\noaffiliation

\date{\today}

\begin{abstract}
	$E_8$ is prominent in mathematics and theoretical physics, and is generally viewed as an exceptional symmetry  in an eight-dimensional space very different from the space we inhabit; for instance the Lie group $E_8$ features heavily in ten-dimensional superstring theory. 
	Contrary to that point of view, here we show that the $E_8$ root system can in fact be constructed from the icosahedron alone and can thus  be viewed purely in terms of three-dimensional geometry.
		The $240$ roots of $E_8$  arise in the 8D Clifford algebra of 3D space as a double cover of the $120$ elements of the icosahedral group, generated by the root system $H_3$. 
		As a by-product, by restricting to even products of root vectors (spinors) in the 4D even subalgebra of the Clifford algebra, 
		one can show that each 3D root system induces a root system in 4D, which turn out to also be exactly the exceptional 4D root systems. 
		The spinorial point of view explains their existence as well as their unusual automorphism groups. 
		This spinorial approach thus in fact allows one to construct all exceptional root systems within the geometry of three dimensions, which opens up a novel interpretation of these phenomena in terms of spinorial geometry.
\end{abstract}

% insert suggested PACS numbers in braces on next line \pacs{}
% insert suggested keywords - APS authors don't need to do this
%\keywords{}

%\maketitle must follow title, authors, abstract, \pacs, and \keywords
\maketitle

\section{Introduction}
Lie groups are a central subject of 20th century mathematics as well as physics. In particular, the largest exceptional Lie group $E_8$ is central to String Theory and Grand Unified Theories and is thus arguably the single most important symmetry group in modern theoretical physics. Lie groups are continuous groups and are closely related to their corresponding Lie algebras. Their  non-trivial part in turn is described by a root system: a collection of reflection generating vectors called roots, which generate a reflection symmetry group (called a crystallographic Coxeter group or Weyl group). It is easy to move between those four related concepts and we will usually not make a distinction -- with the exception of non-crystallographic root systems such as  $H_3$ (which generates icosahedral symmetry) and its 4D analogue $H_4$, since their non-crystallographic nature means that there is no associated Lie algebra.  

This Lie-centric view point of much of theoretical physics, due to the fact that the gauge theories of high energy physics are formulated in terms of Lie groups, has unduly neglected the non-crystallographic groups. Here we instead advocate the root system as the more convenient and perhaps fundamental concept. The shift of perspective that allows to prove the central results of this paper is motivated in two ways. 
Firstly, non-crystallographic symmetries such as icosahedral symmetry are important in other areas of science outside the gauge theory paradigm, such as for the symmetries of viruses and fullerenes \cite{Crick:1956,Caspar:1962,DechantTwarockBoehm2011H3aff,DechantTwarockBoehm2011E8A4,Dechant:eo5029}, where we have  demonstrated the usefulness of a root system-based approach. 
The second ingredient is employing a Clifford algebra framework. 
Since for root systems the use of an inner product is already implicit, one can construct the Clifford algebra on the underlying vector space without any loss of generality, but achieving conceptual and computational simplifications. 
As these Clifford algebras have matrix representations, there is traditionally much prejudice towards resorting to matrices rather than working with the algebra directly; we advocate the more algebraic and geometric point of view, in particular as it provides a simple and geometric way to construct spinor groups. 

This approach offers a Clifford geometric construction of all the exceptional root systems as induced from the 3D root systems, a fact that has not been appreciated before from the Lie-theoretic perspective. 
The $E_8$ root system is  commonly thought of as an exceptional (i.e. there are no corresponding symmetry groups in arbitrary dimensions) phenomenon of  eight-dimensional geometry, far removed from 3D geometry. In this article we show that the eight dimensions of the 3D Clifford algebra actually allow $E_8$ to be unveiled as a 3D geometric phenomenon in disguise; likewise all 4D exceptional root systems also arise within 3D geometry in this way. This opens up a completely novel way of viewing exceptional higher-dimensional phenomena in terms of 3D spinorial geometry.

This paper is structured in the following way. After some preliminary definitions and background in Sections \ref{background} (root systems and reflection groups) and \ref{sec_versor} (Clifford algebras), we show in general that any 3D symmetry group induces a 4D symmetry group via their root systems (Section \ref{sec_pin}). In particular,  the Platonic root systems $(A_3, B_3, H_3)$  induce all the exceptional 4D root systems $(D_4, F_4, H_4)$ in terms of 3D spinors; this also explains their unusual symmetry groups.  In Section \ref{H4} we concretely explain the case of icosahedral symmetry $H_3$ inducing the exceptional largest non-crystallographic  Coxeter group $H_4$ from a spinor group (the binary icosahedral group $2I$) that describes the $60$ icosahedral rotations in terms of $120$ spinors doubly covering the rotations. The above collection of Platonic root systems $(A_3, B_3, H_3)$ in fact forms a trinity that is related to the trinity of exceptional Lie groups $(E_6, E_7, E_8)$ via various intermediate trinities and also via my new spinor construction in combination with the McKay correspondence. This is the first hint that the icosahedron may be indirectly related with $E_8$. Section \ref{birth} makes a completely new, direct connection between them by concretely constructing the $240$ roots of the $E_8$ root system from the $240$ pinors that doubly cover the $120$ icosahedral reflections and rotations in the eight-dimensional Clifford algebra of 3D. Thus \textbf{all} the exceptional root systems can  in fact be seen as induced from the polyhedral symmetries and the Clifford algebra of 3D. This offers a new way of better understanding these exceptional phenomena in terms of spinorial geometry; this reinterpretation has the potential for a wide range of profound consequences. We conclude in Section \ref{concl}.

\section{Root systems and reflection groups}\label{background}

In this section, we introduce Coxeter (reflection) groups via the  root systems that generate them \cite{Humphreys1990Coxeter}:

\begin{defn}[Root system] \label{DefRootSys}
A \emph{root system} is a collection $\Phi$ of non-zero (root)  vectors $\alpha$ spanning an $n$-dimensional Euclidean vector space $V$ endowed with a positive definite bilinear form denoted by $(\cdot \vert \cdot)$, which satisfies the  two axioms:
\begin{enumerate}
\item $\Phi$ only contains a root $\alpha$ and its negative, but no other scalar multiples: $\Phi \cap \mathbb{R}\alpha=\{-\alpha, \alpha\}\,\,\,\,\forall\,\, \alpha \in \Phi$. 
\item $\Phi$ is invariant under all reflections $$s_\alpha: \lambda\rightarrow s_\alpha(\lambda)=\lambda - 2\frac{(\lambda|\alpha)}{(\alpha|\alpha)}\alpha$$ corresponding to reflections in hyperplanes orthogonal to root vectors in $\Phi$: $s_\alpha\Phi=\Phi \,\,\,\forall\,\, \alpha\in\Phi$. 
\end{enumerate}
\end{defn}

For a crystallographic root system, a subset $\Delta$ of $\Phi$, called \emph{simple roots} $\alpha_1, \dots, \alpha_n$, is sufficient to express every element of $\Phi$ via $\mathbb{Z}$-linear combinations with coefficients of the same sign. 
$\Phi$ is therefore  completely characterised by this basis of simple roots. In the case of the non-crystallographic root systems $H_2$, $H_3$ and $H_4$, the same holds for the extended integer ring $\mathbb{Z}[\tau]=\lbrace a+\tau b| a,b \in \mathbb{Z}\rbrace$, where $\tau$ is   the golden ratio $\tau=\frac{1}{2}(1+\sqrt{5})=2\cos{\frac{\pi}{5}}$, and $\sigma$ is its Galois conjugate $\sigma=\frac{1}{2}(1-\sqrt{5})$ (the two solutions to the quadratic equation $x^2=x+1$).  
 For the crystallographic root systems, the classification in terms of Dynkin diagrams essentially follows the one familiar from Lie groups and Lie algebras, as their Weyl groups are  the crystallographic Coxeter groups. A mild generalisation to so-called Coxeter-Dynkin diagrams is necessary for the non-crystallographic root systems:
\begin{defn}[Coxeter-Dynkin diagram and Cartan matrix] 
	A graphical representation of the geometric content of a root system is given by \emph{Coxeter-Dynkin diagrams}, where nodes correspond to simple roots, orthogonal roots are not connected, roots at $\frac{\pi}{3}$ have a simple link, and other angles $\frac{\pi}{m}$ have a link with a label $m$. 
The  \emph{Cartan matrix} of a set of simple roots $\alpha_i\in\Delta$ is defined as the matrix
	$A_{ij}=2{(\alpha_i\vert \alpha_j)}/{(\alpha_j\vert \alpha_j)}$.
\end{defn}
For instance, the root system of the icosahedral group $H_3$ has one link labelled by $5$ (via the above relation $\tau=2\cos{\frac{\pi}{5}}$), as does its four-dimensional analogue $H_4$.

The reflections in the second axiom of the root system generate a reflection group. A Coxeter group is a mathematical abstraction of the concept of a reflection group via  involutory  generators (i.e. they square to the identity, which captures the idea of a reflection), subject to mixed relations that represent $m$-fold rotations (since two successive reflections generate a rotation in the plane defined by the two roots).
\begin{defn}[Coxeter group] A \emph{Coxeter group} is a group generated by a set of involutory generators $s_i, s_j \in S$ subject to relations of the form $(s_is_j)^{m_{ij}}=1$ with $m_{ij}=m_{ji}\ge 2$ for $i\ne j$. 
\end{defn}
The  finite Coxeter groups have a geometric representation where the involutions are realised as reflections at hyperplanes through the origin in a Euclidean vector space $V$, i.e. they are essentially  just the classical reflection groups. In particular, then the abstract generator $s_i$ corresponds to the simple {reflection}
$s_i: \lambda\rightarrow s_i(\lambda)=\lambda - 2\frac{(\lambda|\alpha_i)}{(\alpha_i|\alpha_i)}\alpha_i$
 in the hyperplane perpendicular to the  simple {root } $\alpha_i$.
The action of the Coxeter group is  to permute these root vectors, and its  structure is thus encoded in the collection  $\Phi\in V$ of all such roots, which in turn form a root system.

Here we employ a Clifford algebra framework, which via the geometric product affords a uniquely simple prescription for performing reflections via `sandwiching' $-\alpha \lambda \alpha$ (assuming unit normalisation).  Since due to the Cartan-Dieudonn\'e theorem any orthogonal transformation in any dimension and signature can be written as products of reflections, the `sandwiching' prescription  in fact extends to any orthogonal transformation. For any root system, the quadratic form mentioned in the definition can always be used to enlarge the $n$-dimensional vector space $V$ to the corresponding $2^n$-dimensional Clifford algebra. The Clifford algebra  is therefore a very natural object to consider in this context, as its unified structure simplifies many problems both conceptually and computationally, rather than applying the linear structure of the space and the inner product separately. In particular, it provides (s)pinor double covers of the (special) orthogonal transformations, as well as geometric quantities that serve as unit imaginaries. We therefore advocate the use of Clifford algebra as the most natural framework for root systems and reflection groups.

\section{Clifford versor framework}\label{sec_versor}

Clifford algebra can be viewed as a deformation of the (perhaps more familiar) exterior algebra by a quadratic form -- though we do not necessarily advocate this point of view; they are isomorphic as vector spaces, but not as algebras, and Clifford algebra is in fact much richer due to the invertibility of the algebra product. 
The \emph{geometric product} of Geometric/Clifford Algebra is defined by $xy=x\cdot y+x \wedge y$, where the scalar product (given by the symmetric bilinear form) is the symmetric part $x\cdot y=(x|y)=\frac{1}{2}(xy+yx)$ and the exterior product the antisymmetric part $x\wedge y=\frac{1}{2}(xy-yx)$   \cite{Hestenes1966STA, HestenesSobczyk1984, Hestenes1990NewFound,LasenbyDoran2003GeometricAlgebra}.
It provides a very compact and efficient way of handling reflections in any number of dimensions, and thus by the \emph{Cartan-Dieudonn\'e theorem} in fact of any orthogonal transformation. For a unit vector $\alpha$, the two terms in the above formula for a reflection of a vector $v$ in the hyperplane orthogonal to $\alpha$ simplify to the double-sided (`sandwiching') action of $\alpha$ via the geometric product
	\begin{equation}\label{in2refl}
	  v\rightarrow s_\alpha v=v'=v-2(v|\alpha)\alpha=v-2\frac{1}{2}(v\alpha+\alpha v)\alpha=v-v\alpha^2-\alpha v\alpha=-\alpha v \alpha.
	\end{equation}
This  prescription for reflecting vectors in hyperplanes is remarkably compact (note that $\alpha$ and $-\alpha$ encode the same reflection and thus provide a double cover). Via the Cartan-Dieudonn\'e theorem, any orthogonal transformation can be written as the product of reflections,  and thus by performing consecutive reflections, each given via `sandwiching', one is led to define a versor as
 a Clifford multivector $A=a_1a_2\dots a_k$, that is the product of $k$ unit vectors $a_i$  \cite{Hestenes1990NewFound}.  Versors form a multiplicative group called the versor/pinor group $\Pin$ under single-sided multiplication with the geometric product, with inverses given by $\tilde{A}A=A\tilde{A}=\pm 1$, where the tilde denotes the reversal of the order of the constituent vectors $\tilde{A}=a_k\dots a_2a_1$, and  the $\pm$-sign defines its parity.
Every orthogonal transformation $\underbar{A}$ of a vector $v$ can thus be expressed by means of unit versors/pinors via
\begin{equation}\label{in2versor}
\underbar{A}: v\rightarrow  v'=\underbar{A}(v)=\pm{\tilde{A}vA}.
\end{equation}
Unit versors are double-covers of the respective orthogonal transformation, as $A$ and $-A$ encode the same transformation. Even versors $R$, that is, products of an even number of vectors, are called \emph{spinors} or \emph{rotors}. They form a subgroup of the $\Pin$ group and constitute a double cover of the special orthogonal group, called the $\Spin$ group.
Clifford algebra therefore provides a particularly natural and simple construction of the $\Spin$ groups. Thus the remarkably simple construction of the binary polyhedral groups in Section \ref{sec_pin} is not at all surprising from a Clifford point of view, but appears to be unknown in the Coxeter community, and ultimately leads to the novel result of the spinor induction theorem of (exceptional) 4D root systems in Section \ref{sec_pin} and the construction of the $E_8$ root system from $H_3$ in Section \ref{birth}.

\begin{table}
\caption{Versor framework for a unified treatment of the chiral, full,  binary and pinor polyhedral groups}
\label{tab:1}       % Give a unique label
%
% Follow this input for your own table layout
%
\begin{tabular}{p{2.6cm}p{3.3cm}p{4.9cm}}
\hline%\noalign{\smallskip}
Continuous group &Discrete subgroup & Multivector action  \\
%\noalign{\smallskip}\svhline\noalign{\smallskip}
\hline
$SO(n)$&rotational/chiral & $x\rightarrow \tilde{R}xR$\\
$O(n)$&reflection/full & $x\rightarrow \pm\tilde{A}xA$\\
$\Spin(n)$&binary  & spinors $R$ under $(R_1,R_2)\rightarrow R_1R_2$\\
$\Pin(n)$& pinor & pinors $A$ under $(A_1,A_2)\rightarrow A_1A_2$\\
%\noalign{\smallskip}\hline\noalign{\smallskip}
\hline
\end{tabular}
\end{table}

The versor realisation of the orthogonal group is much simpler than conventional matrix approaches. 
Table \ref{tab:1} summarises the various action mechanisms of multivectors: a rotation (e.g. the continuous group $SO(3)$ or the discrete subgroup, the chiral icosahedral group $I=A_5$) is given by double-sided action of a spinor $R$, whilst these spinors themselves form a group under single-sided action/multiplication (e.g. the continuous group $\Spin(3)\sim SU(2)$ or the discrete subgroup, the binary icosahedral group $2I$).
Likewise, a reflection (continuous $O(3)$ or the discrete subgroup, the full icosahedral group the Coxeter group $H_3$) corresponds to sandwiching with the versor $A$, whilst the versors single-sidedly form a multiplicative group (the $\Pin(3)$ group or the discrete analogue, the double cover of $H_3$, which we denote $\Pin(H_3)$). In the conformal geometric algebra setup one uses the fact that the conformal group   $C(p,q)$ is homomorphic to $SO(p+1,q+1)$ to treat translations as well as rotations in a unified versor framework  \cite{HestenesSobczyk1984, LasenbyDoran2003GeometricAlgebra,Dechant2011Thesis, Dechant2012AGACSE, Dechant2015ICCA}. \cite{Dechant2012AGACSE, Dechant2015ICCA} also discuss reflections, inversions, translations and modular transformations in this way. 

\begin{example}
The Clifford/Geometric algebra of three dimensions $\Cl(3)$ is generated by three orthogonal -- and thus anticommuting -- unit vectors $e_1$, $e_2$ and $e_3$. It also contains the three bivectors $e_1e_2$, $e_2e_3$ and $e_3e_1$ that all square to $-1$, as well as the  highest grade object $e_1e_2e_3$   (trivector and pseudoscalar), which also squares to $-1$. Therefore, in Clifford algebra various geometric objects arise that provide imaginary structures; however, there can be different ones and they can have non-trivial commutation relations with the rest of the algebra. 
\begin{equation}\label{in2PA}
  \underbrace{\{1\}}_{\text{1 scalar}} \,\,\ \,\,\,\underbrace{\{e_1, e_2, e_3\}}_{\text{3 vectors}} \,\,\, \,\,\, \underbrace{\{e_1e_2=Ie_3, e_2e_3=Ie_1, e_3e_1=Ie_2\}}_{\text{3 bivectors}} \,\,\, \,\,\, \underbrace{\{I\equiv e_1e_2e_3\}}_{\text{1 trivector}}.
\end{equation}
\end{example}

\section{The general 4D spinor induction construction}\label{sec_pin}

In this section we prove that any 3D root system yields a 4D root system via the spinor group obtained by multiplying root vectors in the Clifford algebra  \cite{Dechant2012Induction}.
\begin{prop}[$O(4)$-structure of spinors]\label{HGA_O4}
The space of $\Cl(3)$-spinors $R=a_0+a_1e_2e_3+a_2e_3e_1+a_3e_1e_2$ can be endowed with a \emph{4D Euclidean norm} $|R|^2=R\tilde{R}=a_0^2+a_1^2+a_2^2+a_3^2$ induced by the  \emph{inner product} $(R_1,R_2)=\frac{1}{2}(R_1\tilde{R}_2+R_2\tilde{R}_1)$ between  two spinors $R_1$ and $R_2$. 
\end{prop}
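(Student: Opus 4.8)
The plan is to establish the claim by a direct computation in $\Cl(3)$, using the multiplication rules \eqref{in2PA}. The reversal fixes scalars and sends a unit bivector $e_ie_j$ ($i\neq j$) to $e_je_i=-e_ie_j$, so a spinor written as $R=a_0+B$ with bivector part $B=a_1e_2e_3+a_2e_3e_1+a_3e_1e_2$ has reversal $\tilde R=a_0-B$, and hence $R\tilde R=(a_0+B)(a_0-B)=a_0^2-B^2$ because $a_0$ is central. So the whole statement reduces to evaluating $B^2$.

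First I would compute $B^2$ directly. Each of the three unit bivectors squares to $-1$, and any two distinct ones anticommute, being products of two of the mutually anticommuting vectors $e_1,e_2,e_3$ (for instance $(e_2e_3)(e_3e_1)=-e_1e_2=-(e_3e_1)(e_2e_3)$). Therefore the off-diagonal terms in $B^2$ cancel in pairs and $B^2=-(a_1^2+a_2^2+a_3^2)$, which gives $R\tilde R=a_0^2+a_1^2+a_2^2+a_3^2\ge 0$, with equality only for $R=0$; since this is a scalar it also equals $\tilde R R$. (Equivalently, this is the familiar fact that the even subalgebra of $\Cl(3)$ is the quaternion algebra, with $e_2e_3,e_3e_1,e_1e_2$ the imaginary units, reversal the quaternionic conjugation and $R\tilde R$ the quaternion norm.)

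It then remains to check that the bilinear form $(R_1,R_2)=\tfrac12(R_1\tilde R_2+R_2\tilde R_1)$ is well defined and polarises this norm. For any two spinors the product $R_1\tilde R_2$ again lies in the even subalgebra, and $\widetilde{R_1\tilde R_2}=R_2\tilde R_1$ since reversal is an anti-automorphism with $\widetilde{\tilde R}=R$; hence $R_1\tilde R_2+R_2\tilde R_1=R_1\tilde R_2+\widetilde{R_1\tilde R_2}$ has vanishing bivector part, i.e. it equals $2\langle R_1\tilde R_2\rangle_0$, a genuine real scalar. Thus $(R_1,R_2)$ is symmetric and $\mathbb{R}$-bilinear by construction, and $(R,R)=R\tilde R=a_0^2+a_1^2+a_2^2+a_3^2$ by the previous step, so it is positive definite; consequently $R\mapsto(a_0,a_1,a_2,a_3)$ is an isometry of the spinor space onto Euclidean $\mathbb{R}^4$.

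There is essentially no obstacle here: the argument is a short computation with the reversal anti-automorphism. The only point worth stating carefully is that $R_1\tilde R_2+R_2\tilde R_1$ has no bivector component (equivalently, the pairwise anticommutativity of the three unit bivectors, which is what kills the cross terms in $B^2$); everything else is bookkeeping.
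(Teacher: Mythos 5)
Your proof is correct and is exactly the verification the paper leaves implicit: the paper states Proposition~\ref{HGA_O4} without proof, relying on the standard identification of the even subalgebra of $\Cl(3)$ with the quaternions (which it invokes explicitly in Section~\ref{H4}), and your direct computation --- $\tilde R=a_0-B$, the pairwise anticommutativity of the unit bivectors killing the cross terms in $B^2$, and the observation that $R_1\tilde R_2+\widetilde{R_1\tilde R_2}$ is a genuine scalar --- is precisely that argument spelled out. No gaps.
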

This allows one to  reinterpret the group of 3D spinors generated from a 3D root system as a set of 4D vectors, which in fact can be shown to  satisfy the axioms of a root system as given in Definition \ref{DefRootSys}. 
\begin{thm}[Induction Theorem]\label{HGA_4Drootsys}
Any 3D root system gives rise to a spinor group $G$ which induces a root system in 4D.
\end{thm}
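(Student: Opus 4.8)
The plan is to make the abstract ``spinor group $G$'' of the statement concrete and then to check the two axioms of Definition~\ref{DefRootSys} directly inside the 4D Euclidean space furnished by Proposition~\ref{HGA_O4}. Let $\Phi\subset V\cong\mathbb{R}^3$ be the given 3D root system, normalised so that every root is a unit vector. Using the Clifford reflection formula $s_\alpha\colon v\mapsto -\alpha v\alpha$ of \eqref{in2refl}, define $G$ to be the set of all spinors obtained as even-length products $\alpha_{i_1}\alpha_{i_2}\cdots\alpha_{i_{2k}}$ of roots of $\Phi$; equivalently $G$ is the image in $\Spin(3)$ of the rotation subgroup of the Coxeter group generated by $\Phi$, a finite binary polyhedral group, and every such spinor satisfies $R\tilde R=1$. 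By Proposition~\ref{HGA_O4}, $G$ is thus a finite set of unit vectors in the 4D even subalgebra $\Cl(3)^+$, and the claim to be verified is that this set is a root system in the sense of Definition~\ref{DefRootSys}.

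The first axiom is quickly settled. Since $G$ is finite and every element has 4D norm $1$, we have $\mathbb{R}R\cap G\subseteq\{R,-R\}$ for every $R\in G$, so it remains only to produce $-1\in G$. Picking two roots $\alpha,\beta\in\Phi$ subtending an angle $\pi/m$ for some $m\ge 2$ (such a pair exists because $\Phi$ spans $V$ and hence contains non-proportional roots; as $-\beta\in\Phi$ too we may take the angle to be exactly $\pi/m$), the spinor $\alpha\beta$ equals $\cos(\pi/m)$ plus a $\sin(\pi/m)$-multiple of the unit bivector of their plane, so $(\alpha\beta)^m=-1\in G$; hence $-R\in G$ whenever $R\in G$, and axiom~(1) holds. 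For completeness one checks that $G$ spans $\Cl(3)^+$: together with $1=\alpha\alpha$, the bivector parts of $\alpha_i\alpha_j$ (which span the 3D bivector space once the $\alpha_i$ span $V$) give four linearly independent spinors.

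The substantive step is the second axiom. Using the inner product of Proposition~\ref{HGA_O4} together with $\tilde R_1 R_1=1$, the reflection of a unit spinor $R_2$ in the hyperplane orthogonal to a unit spinor $R_1$ simplifies to
\[
 s_{R_1}(R_2)=R_2-2(R_1,R_2)\,R_1=R_2-\bigl(R_1\tilde R_2+R_2\tilde R_1\bigr)R_1=-R_1\tilde R_2 R_1 .
\]
This identity is the crux: a 4D reflection has collapsed into a purely algebraic sandwiching inside $\Cl(3)^+$. Since $R_1,R_2\in G$ and $G$ is a group in which $\tilde R_2=R_2^{-1}$, we get $R_1\tilde R_2 R_1\in G$, and combining with $-1\in G$ gives $s_{R_1}(R_2)\in G$. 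Thus $G$ is invariant under every reflection in a hyperplane orthogonal to one of its own elements, which is axiom~(2); together with the first axiom this proves that $G$, viewed in $\Cl(3)^+\cong\mathbb{R}^4$, is a root system.

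The only genuinely load-bearing observation in the argument is the identity $s_{R_1}(R_2)=-R_1\tilde R_2 R_1$, which converts the linear-algebraic notion of a 4D reflection into the group multiplication of $G$; the remaining ingredients (finiteness, unit norm, existence of $-1$, spanning) are routine. I would expect that in the full development the real work lies not in this theorem but in the follow-up: identifying, for the Platonic cases $(A_3,B_3,H_3)$, precisely which 4D root system is obtained, and explaining its unusually large automorphism group --- which the construction already makes natural, since conjugation by $G$ visibly acts on the induced root system, but which is not needed for the statement proved here.
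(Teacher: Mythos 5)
Your proof is correct and follows essentially the same route as the paper's: both arguments hinge on the identity $R_2-2(R_1,R_2)R_1=-R_1\tilde R_2R_1$ together with closure of the finite spinor group $G$ under multiplication, reversal and sign, so that the 4D reflection collapses into group multiplication. You merely supply more explicit detail than the paper (unit norms, finiteness, the existence of $-1\in G$ --- which incidentally follows even more directly from $\alpha(-\alpha)=-1$ since $-\alpha\in\Phi$ by axiom (1), avoiding your slightly delicate angle argument).
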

\begin{proof}
Check the two axioms for the root system $\Phi$ consisting of the set of 4D vectors given by the 3D spinor group:
\begin{enumerate}
\item By construction, $\Phi$ contains the negative of a root $R$ since spinors provide a double cover of rotations, i.e. if $R$ is in a spinor group $G$, then so is $-R$ , but no other scalar multiples (normalisation to unity). 
\item $\Phi$ is invariant under all reflections with respect to the inner product $(R_1,R_2)$ in Proposition \ref{HGA_O4} since $R_2'=R_2-2(R_1, R_2)/(R_1, {R}_1) R_1=-R_1\tilde{R}_2R_1\in G$ for $R_1, R_2 \in G$ by the closure property of the group $G$ (in particular $-R$ and $\tilde{R}$ are in $G$ if $R$ is). 
\end{enumerate}
\end{proof}

Since the number of irreducible 3D root systems is limited to $(A_3, B_3, H_3)$, this yields a definite list of induced root systems in 4D -- this turns out to be  $(D_4, F_4, H_4)$, which are exactly the exceptional root systems in 4D. In fact both sets of three are trinities: named after Arnold's observation that many related objects in mathematics form sets of three, beginning with the trinity $(\mathbb{R},\mathbb{C},\mathbb{H})$ \cite{Arnold1999symplectization,Arnold2000AMS}, and extending to projective spaces, Lie algebras, spheres, Hopf fibrations etc. Arnold's original link between these two trinities $(A_3, B_3, H_3)$ and $(D_4, F_4, H_4)$ that arise here is extremely convoluted, and our construction presents a novel  direct link between the two. 
 The abundance of root systems in 4D can in some sense be thought of as due to the accidentalness of our spinor construction.  In particular, the induced root systems are precisely the exceptional (i.e. they do not have counterparts in other dimensions) root systems in 4D: $D_4$ has the  triality symmetry (permutation symmetry of the three legs in the diagram) that is exceptional in 4D and is of great importance in string theory, showing the equivalence of the Ramond-Neveu-Schwarz and the Green-Schwarz strings. $F_4$ is the only $F$-type root system, and $H_4$ is the largest non-crystallographic root system. In contrast, in arbitrary dimensions there are only   $A_n$ ($n$-simplex),  $B_n$  ($n$-hypercube and $n$-hyperoctahedron) and $D_n$.

Not only is there an abundance of root systems related to the Platonic solids as well as their exceptional nature  \cite{Dechant2013Platonic}, but they also have unusual automorphism groups, in that the order of the groups is proportional to the square of the number of roots. This is also explained via the above spinor construction via the following result: 
\begin{thm}[Spinorial symmetries]\label{HGAsymmetry}
A root system induced through the Clifford spinor construction via a binary polyhedral spinor group $G$ has an automorphism group that trivially contains two factors of the respective spinor group $G$ acting from the left and from the right.
\end{thm}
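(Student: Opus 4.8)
The plan is to realise two commuting copies of $G$ directly as subgroups of $\Aut(\Phi)$, where $\Phi$ is the induced $4$D root system of Theorem \ref{HGA_4Drootsys}, by letting $G$ act on the $4$-dimensional spinor space of Proposition \ref{HGA_O4} by left multiplication and by right multiplication. Concretely, for a fixed unit spinor $S\in G$ consider the $\mathbb{R}$-linear maps $L_S\colon R\mapsto SR$ and $\rho_S\colon R\mapsto RS$ on the space of $\Cl(3)$-spinors. The statement follows once I show that each such map is (i) an isometry for the $4$D norm $|R|^2=R\tilde R$, and (ii) a bijection of the finite set $\Phi$ onto itself, hence an element of $\Aut(\Phi)$, and that the maps so obtained fit together as two factors of $G$.

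Step (i) is a one-line Clifford computation and is really the heart of the argument. Since $R\tilde R$ is a non-negative real scalar by Proposition \ref{HGA_O4}, it is central in the algebra, so $|SR|^2=(SR)\widetilde{(SR)}=S\,R\tilde R\,\tilde S=(R\tilde R)\,S\tilde S=|S|^2|R|^2=|R|^2$ using $|S|=1$; the same works for $\rho_S$ via $\widetilde{RS}=\tilde S\tilde R$. (Equivalently one checks $(SR_1,SR_2)=(R_1,R_2)$ directly, since $R_1\tilde R_2+R_2\tilde R_1$ is a scalar.) Hence $L_S,\rho_S\in O(4)$. Step (ii) is immediate from the construction in Theorem \ref{HGA_4Drootsys}: $\Phi$ is by definition the set of unit spinors lying in the group $G$, so closure of $G$ under the geometric product gives $SR\in G$ and $RS\in G$ whenever $R\in\Phi$; an isometry mapping the finite set $\Phi$ into itself must permute it. Therefore every $L_S$ and every $\rho_S$ with $S\in G$ is an automorphism of $\Phi$.

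It remains to organise these automorphisms into two factors of $G$. The assignment $S\mapsto L_S$ is a homomorphism, $L_SL_T=L_{ST}$, and is injective because $L_S=\mathrm{id}$ forces $S=S\cdot 1=1$; in particular $L_{-1}$ is the central inversion $-\mathrm{id}$, so this copy of $G$ is genuinely non-trivial. Likewise $S\mapsto\rho_S$ embeds $G$ (it is an anti-homomorphism, $\rho_S\rho_T=\rho_{TS}$, but inversion $S\mapsto S^{-1}=\tilde S$ identifies $G$ with its opposite group, so abstractly it is again a copy of $G$). Associativity of the geometric product gives $L_S\rho_T=\rho_TL_S$, so the two copies commute and together generate a homomorphic image of $G\times G$ inside $\Aut(\Phi)$. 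The one point needing care is the size of the overlap: $L_S=\rho_T$ forces $S=T$ (apply to $R=1$) and then $SR=RS$ for all $R\in G$, i.e. $S\in Z(G)$; since the binary polyhedral groups have $Z(G)=\{\pm 1\}$, the two factors meet precisely in $\mathbb{Z}_2$ and $\Aut(\Phi)$ contains $(G\times G)/\mathbb{Z}_2$. I expect this bookkeeping about the centre, rather than any genuine difficulty, to be the only subtle point. Note finally that the full automorphism group is larger still (by a further factor of $2$, supplied by the reversal $R\mapsto\tilde R$, which is likewise a norm-preserving bijection of $\Phi$ but is not of the form $R\mapsto SRT$), but that extra symmetry goes beyond what the present statement asserts.
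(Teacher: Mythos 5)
Your proof is correct and follows essentially the same route the paper intends: the paper offers no explicit proof, merely remarking that the result ``merely reflects that $2I$ is closed with respect to left and right multiplication,'' and your argument is exactly that observation made rigorous (isometry of $L_S$ and $\rho_S$ via $R\tilde R$ being a scalar, plus closure of $G$). Your extra bookkeeping about the central overlap $\{\pm 1\}$ and the additional reversal symmetry is more careful than the paper itself and correctly explains the order count behind statements like $\Aut(H_4)=2I\times 2I$.
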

This systematises many case-by-case observations on the structure of the automorphism groups \cite{Koca2006F4,Koca2003A4B4F4}, and shows that all of the 4D geometry is already contained in 3D \cite{Dechant2012CoxGA}. For instance, the automorphism group of the $H_4$ root system is $2I\times 2I$ -- in the spinor picture, it is not surprising that $2I$ yields both the root system and the two factors in the automorphism group. We therefore consider the example of the induction $H_3\rightarrow H_4$ in more detail, as this will prove one of the crucial steps towards constructing the $E_8$ root system. 

\section{Example: Constructing $H_4$ from $H_3$}\label{H4}

 Here we  construct the spinor group generated by the simple reflections of $H_3$ explicitly as an example and as an intermediate result. The simple roots of $H_3$ are taken as $$\alpha_1=e_2, \alpha_2=-\frac{1}{2}((\tau -1)e_1+e_2+\tau e_3)\text{ and }\alpha_3=e_3.$$ Under free multiplication under the Clifford algebra product, they generate a group with $240$  pinors, whilst the even subgroup consists of $120$ spinors, for instance of the form $$\alpha_1\alpha_2=-\frac{1}{2}(1-(\tau -1)e_1e_2+\tau e_2e_3)\text{ and }\alpha_2\alpha_3=-\frac{1}{2}(\tau-(\tau -1)e_3e_1+e_2e_3).$$ These are the double covers of  $H_3=A_5\times \mathbb{Z}_2$ of order $120$ and $I=A_5$ of order $60$, respectively. The spinors have four components ($1$, $e_1e_2$, $e_2e_3$, $e_3e_1$); by taking the components of these $120$ spinors as a set of vectors in 4D, one obtains $$(\pm 1, 0, 0, 0) \text{ (8 permutations) }$$
$$\frac{1}{2}(\pm 1, \pm 1, \pm 1, \pm 1) \text{ (16 permutations) }$$
$$\frac{1}{2}(0, \pm 1, \pm \sigma, \pm \tau) \text{ (96 even permutations) },$$
which are precisely the $120$ roots of the  $H_4$ root system. This is very surprising from a Coxeter perspective, as one usually thinks of $H_3$ as a subgroup of $H_4$, and therefore of $H_4$ as more `fundamental'; however,  one now sees that $H_4$ does not in fact contain any structure that is not already contained in $H_3$, and can therefore think of $H_3$ as more fundamental \cite{Dechant2012CoxGA}. 

From a Clifford perspective it is not surprising to find this group of $120$ spinors, which is the binary icosahedral group $2I$, since Clifford algebra provides a simple construction of the Spin groups; however, this is completely unexpected from the conventional Coxeter and Lie group perspective. This spinor group $2I$ has $120$ elements and $9$ conjugacy classes.
 $I$  has five conjugacy classes  and it being of order $60$ implies that it has five irreducible representations of dimensions  $1$, $3$, $\bar{3}$, $4$ and $5$ (since the sum of the dimensions squared gives the order of the group $\sum d_i^2=|G|$). The nine conjugacy classes of the binary icosahedral group $2I$ of order $120$ mean that this acquires a further four irreducible spinorial representations $2_s$, $2_s'$, ${4_s}$ and ${6_s}$. The automorphism group of the $H_4$ root system is  $2I\times 2I$, which in this framework is trivial (see Theorem \ref{HGAsymmetry}), as it merely reflects that $2I$ is closed with respect to left and right multiplication, but the literature has been very confused about this simple fact by overlooking the underlying simple construction.  It is also worth pointing out that it is convenient to have all these 4 different types of polyhedral groups in a unified framework within the Clifford algebra, rather than using $SO(3)$ matrices for the rotations and then having to somehow move to $SU(2)$ matrices for the binary groups. 

In terms of a quaternionic description of the even (spinorial) subalgebra (they are isomorphic, though we advocate here the geometric rather than algebraic point of view), the $H_3$ root system consists precisely of the pure quaternions, i.e. those without a real part, and the full group can be generated from those under quaternion multiplication. This is very poorly understood in the literature, and just hinges on the above description in terms of spinors together with the fact that the inversion  $\pm I$ is contained in the full $H_3$ group, as one can then trivially Hodge dualise root vectors to bivectors, which are pure quaternions. We have explained this in previous work \cite{Dechant2012CoxGA}. For instance, the statement is not true when the inversion is not contained in the group, as is the case for $A_3$. However, the spinorial induction construction still works for this, yielding $D_4$. Moreover, the `quaternionic generators' generating the 4D groups via quaternion multiplication are just seen to be the even products of 3D simple roots $\alpha_1\alpha_2$ and $\alpha_2\alpha_3$ so that the 4D group manifestly does not contain anything that was not already contained in the 3D group. We therefore believe the spinor induction point of view of going from 3D to 4D is more fundamental than the pure quaternion approach identifying the 3D group as a subgroup of the 4D group.

This binary icosahedral group has a curious connection with the affine Lie algebra $E_8^+$ (and likewise for the other binary polyhedral groups and the affine Lie algebras of $ADE$-type) via the so-called McKay correspondence \cite{Mckay1980graphs}, which is twofold: 
 We can define a graph by assigning a node to each of the nine irreducible representations of the binary icosahedral group with the following rule for connecting nodes: each node corresponding to a certain irreducible representation is connected to the nodes corresponding to those irreducible representations that are contained in its tensor product with the irrep $2_s$. For instance, tensoring the trivial representation $1$ with $2_s$ trivially gives $2_s$ and thus the only link  $1$ has is with $2_s$; $2_s\otimes 2_s=1+3$, such that $2_s$ is connected to $1$ and $3$, etc. The graph that is built up in this way is precisely the Dynkin diagram of affine $E_8$, as shown in Figure \ref{figE6aff}. The second connection is the following observation: the Coxeter element is the product of all the simple reflections $\alpha_1\dots \alpha_8$ and its order, the Coxeter number $h$, is $30$ for $E_8$. This also happens to be the sum of the dimensions of the irreducible representations of $2I$, $\sum d_i$. This extends to all other cases of binary polyhedral groups and $ADE$-type affine Lie algebras.

\begin{figure}
	\begin{center}
\includegraphics[width=8cm]{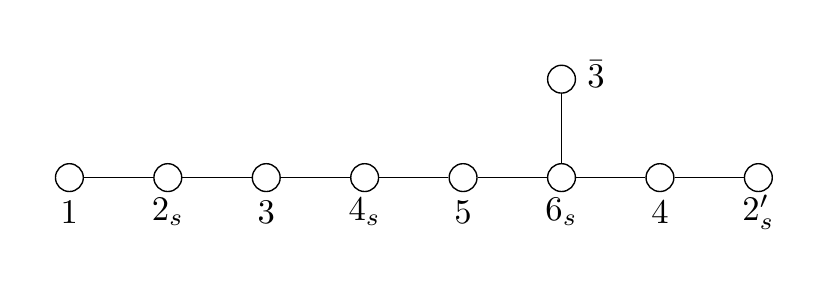}
\end{center}
\caption[$E_6^+$]{The graph depicting the tensor product structure of the binary icosahedral group $2I$ is the same as the Dynkin diagram of  affine  $E_8$. }
\label{figE6aff}
\end{figure}

The following indirect connection between $(A_3, B_3, H_3)$ and $(E_6, E_7, E_8)$ via Clifford spinors does not seem to be known: we note that $(12, 18, 30)$ are not only the sums of the dimensions of the irreducible representations $\sum d_i$ of the binary polyhedral groups, but more fundamentally are exactly the numbers of roots $\Phi$ in the 3D root systems $(A_3, B_3, H_3)$ from which these binary polyhedral groups are generated. This fact connects the 3D root systems through the binary polyhedral groups and via the McKay correspondence all the way to the affine Lie algebras. Our construction therefore makes deep connections between trinities, and puts the McKay correspondence into a wider framework. It is also striking that the affine Lie algebra and the 4D root system trinities have identical Dynkin diagram symmetries: $D_4$ and $E_6^+$ have triality $S_3$, $F_4$ and $E_7^+$ have an $S_2$ symmetry and $H_4$ and $E_8^+$ only have $S_1$.  There is thus an indirect connection between the icosahedral group $H_3$ and the exceptional $E_8$. In the next section we will show a new, explicit direct connection between $H_3$ and $E_8$ within the Clifford algebra of 3D, by identifying the $240$ roots of $E_8$ with the $240$ pinors of icosahedral symmetry.

\section{The birth of $E_8$ out of the spinors of the icosahedron}\label{birth}

Previously, we discussed the construction of the $120$ elements of the binary icosahedral group $2I$ (within the 4D even subalgebra of the 8D Clifford algebra of 3D space), which can be reinterpreted as the $120$ roots of $H_4$. Here we list these elements again in terms of a 4D basis
$$(\pm 1, 0, 0, 0) \text{ (8 permutations) }$$
$$\frac{1}{2}(\pm 1, \pm 1, \pm 1, \pm 1) \text{ (16 permutations) }$$
$$\frac{1}{2}(0, \pm 1, \pm \sigma, \pm \tau) \text{ (96 even permutations) }.$$
A convenient set of simple roots for $H_4$ is given by $$\alpha_1=\frac{1}{2}(-\sigma, -\tau, 0, -1), \alpha_2=\frac{1}{2}(0, -\sigma, -\tau,  1),$$ $$\alpha_3=\frac{1}{2}(0,1, -\sigma, -\tau)\text{ and }\alpha_4=\frac{1}{2}(0, -1, -\sigma, \tau).$$

Since the $H_3$ root system contains three orthogonal roots, e.g.  $(1, 0, 0)$, $(0, 1, 0)$ and $(0, 0, 1)$, the pinor group generated under free Clifford multiplication contains the inversion, given by double-sided action of $\pm e_1e_2e_3=\pm I$. So whilst the $120$ even products of root vectors stay in the 4D even spinor  subalgebra consisting of scalar and bivector parts,  the $120$ odd products in the remaining vector and pseudoscalar part of the algebra are  just a second copy of these spinors multiplied by $I$. So for the pinors double covering the group $H_3$ of order $120$ in the  Clifford algebra of 3D space -- which is itself an 8D vector space -- one gets $240$ objects, as one would expect for a construction of $E_8$, and furthermore, they consist of a copy of $H_4$ with $120$ roots and another copy multiplied by $I$, schematically $H_4+IH_4$.

For this set of $240$ pinors one can now define a `reduced inner product' in the following way  \cite{Wilson1986E8, MoodyPatera:1993b}: we keep the spinor copy of $H_4$ and multiply the copy $IH_4$ by $\tau I$, then take inner products taking into account the recursion relation $\tau^2=\tau+1$ but finally in this inner product (counterintuitively) setting $\tau$ equal to zero.  
An equivalent point of view is this:  a general inner product between roots over the extended integer ring $\mathbb{Z}[\tau]$ is a $\mathbb{Z}[\tau]$-integer and thus has an integer and a $\tau$ part $(\cdot, \cdot) = a+\tau b$. The reduced inner product $(\cdot, \cdot)_\tau$ is thus now defined as the integer part alone $$(\cdot, \cdot)_\tau=(a+\tau b)_\tau:=a.$$ 

\begin{figure}
	\begin{center}
 \includegraphics[width=8cm]{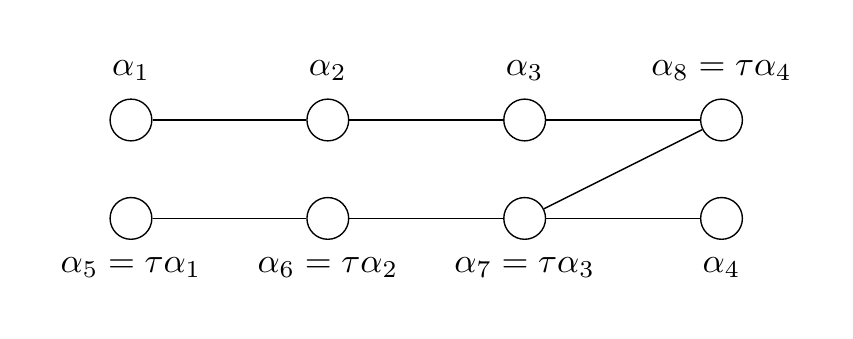}    
  \caption[$E_8$]{$E_8$ Coxeter-Dynkin diagram via two copies of $H_4$ and the reduced scalar product.}
\label{figE8}
\end{center}
\end{figure}

This set of $240$ Clifford pinors contains the above mentioned choice of simple roots of $H_4$ 
$$\alpha_1=\frac{1}{2}(-\sigma, -\tau, 0, -1), \alpha_2=\frac{1}{2}(0, -\sigma, -\tau,  1),$$ $$\alpha_3=\frac{1}{2}(0,1, -\sigma, -\tau) \text { and } \alpha_4=\frac{1}{2}(0, -1, -\sigma, \tau)$$ 
along with their $\tau$-multiples 
$$\alpha_5=\tau \alpha_1=\frac{1}{2}(1, -\tau-1, 0, -\tau), \alpha_6=\tau \alpha_2=\frac{1}{2}(0, 1, -\tau-1,  \tau),$$
$$\alpha_7=\tau \alpha_3=\frac{1}{2}(0,\tau, 1, -\tau-1) \text{ and } \alpha_8=\tau \alpha_4=\frac{1}{2}(0, -\tau, 1, \tau+1).$$ 
With the reduced inner product, most of the inner products among the two sets of $H_4$ are not affected, but crucially, there are a few products that change links in and between the two $H_4$ diagrams, turning them into the $E_8$ diagram. In turn, the $240$ icosahedral pinors  are the $240$ roots of $E_8$ with respect to the reduced inner product, and it is straightforward (if tedious) to check closure under reflections with respect to this new inner product. 

We discuss a few exemplary and crucial inner products to illustrate the method of taking the reduced inner product as well as the result. 
$$(\alpha_1,\alpha_2)_\tau=\left(-\frac{1}{2}\right)_\tau=-\frac{1}{2},$$
which illustrates that the simply-connected nodes of the original $H_4$ diagrams are not affected. 
$$(\alpha_3,\alpha_4)=\frac{1}{4}(-1+\sigma^2-\tau^2)=\frac{1}{4}(-1+\sigma+1-\tau-1)=\frac{1}{4}(-1+1-\tau-\tau)=-\frac{\tau}{2},$$ resulting in the link labelled by $5$ in the $H_4$ diagram. However, 
$$(\alpha_3,\alpha_4)_\tau=\left(-\frac{\tau}{2}\right)_\tau=0,$$
such that  these roots are now orthogonal with respect to the new reduced inner product. On the other hand,
$$(\alpha_7,\alpha_8)_\tau=(\tau \alpha_3,\tau \alpha_4)_\tau=\left(-\frac{\tau^3}{2}\right)_\tau=\left(-\frac{2\tau+1}{2}\right)_\tau=-\frac{1}{2},$$
such that the $5$-labelled link from the other $H_4$ diagram is now actually just a simple link. 
The surprising additional (simple) links introduced are instead
$$(\alpha_7,\alpha_4)_\tau=(\tau \alpha_3,\alpha_4)_\tau=(\alpha_3,\tau \alpha_4)_\tau=(\alpha_8,\alpha_3)_\tau=-\left(\frac{\tau^2}{2}\right)_\tau=-\left(\frac{\tau+1}{2}\right)_\tau=-\frac{1}{2}.$$
 The Cartan matrix for this set of simple roots is thus
$$ \begin{pmatrix}
   2&-1&0&0&0&0&0&0  
\\ -1&2&-1&0&0&0&0  &0
\\ 0&-1&2&0&0&0&0&-1
\\ 0&0&0&2&0&0&-1&0  
\\ 0&0&0&0&2&-1 &0&0 
\\ 0&0&0&0&-1&2 &-1 &0
\\ 0&0&0&-1&0&-1&2 &-1
\\ 0&0&-1&0&0&0&-1 &2\end{pmatrix},$$
which is the $E_8$ Cartan matrix in slightly unusual form, and likewise for the corresponding Coxeter diagram  as shown in Fig. \ref{figE8}. 
This completes the construction of $E_8$ from the icosahedral root system $H_3$ within the eight-dimensional Clifford algebra of 3D space. 

Surprisingly, the $E_8$ root system has therefore been lurking in plain sight within the geometry of the Platonic solid the icosahedron for three millennia. As with the 4D induction construction, this discovery was only possible in Clifford algebra -- there is much prejudice against the usefulness of Clifford algebras (since they have matrix representations) and usually matrix methods are equivalent if less insightful -- but the 4D and 8D induction constructions are to our knowledge the only results that \emph{require} Clifford algebra and were  invisible to standard matrix methods.

\section{Conclusion}\label{concl}
We have constructed for the first time the $E_8$ root system  from the icosahedral root system $H_3$ within the 3D Clifford algebra, which is itself an eight-dimensional vector space. 
Some of the details were previously known individually, but the  construction of $E_8$ from the icosahedron is  new and puts those details into a coherent framework.
It is fascinating that the exotic $E_8$ root system that is so central to modern theoretical physics and mathematics has been hiding in plain sight since at least the ancient Greek times of Plato, lurking in the shadows of the icosahedron, and can be viewed much more naturally as a phenomenon within 3D geometry. This is completely contrary to the prevailing view of $E_8$ as an eight-dimensional phenomenon with no connection to the space we inhabit. The same holds for all the 4D exceptional root systems -- their construction, existence and automorphism groups  are thought of much more naturally in terms of spinorial geometry. 
These spinorial constructions therefore in fact yield  all the exceptional root systems and could have  profound consequences by opening up a  new area of interpreting these phenomena in terms of spinorial geometry.

\section*{Acknowledgements}
{I would like to thank Reidun Twarock, Anne Taormina, C\'eline Boehm, David Hestenes, Anthony Lasenby, John Stillwell and Robert Wilson for helpful discussions, criticism and support over the years. }

%%%%%%%% Insert bibliography here %%%%%%%%%%%%%%

\end{document}